\newtheorem{theorem}{Theorem}
\newtheorem{definition}{Definition}
\newtheorem{prop}{Proposition}
\def\E{{\textrm{E}}\,}
\def\Cov{{\textrm{Cov}}\,}
\title{The Elliptical Processes: \\a Family of Fat-tailed Stochastic Processes}
\author[1,2]{\bf{Maria B\.ankestad}}
\author[3]{\bf{Jens Sj\"olund}}
\author[1]{\bf{Jalil Taghia}}
\author[1]{\bf{Thomas Sch\"on}}
\affil[1]{Department of Information Technology, Uppsala Univerity}
\affil[2]{RISE Research Institutes of Sweden AB}
\affil[3]{Research and Physics, Elekta}
\begin{document}

\maketitle
\begin{abstract}
  We present the elliptical processes---a family of non-parametric probabilistic models that subsumes the Gaussian process and the \mbox{Student-t} process. This generalization includes a range of new fat-tailed behaviors yet retains computational tractability. We base the elliptical processes on a representation of elliptical distributions as a continuous mixture of Gaussian distributions and derive closed-form expressions for the marginal and conditional distributions. We perform numerical experiments on robust regression using an elliptical process defined by a piecewise constant mixing distribution, and show advantages compared with a Gaussian process. The elliptical processes may become a replacement for Gaussian processes in several settings, including when the likelihood is not Gaussian or when accurate tail modeling is critical.

\end{abstract}

\section{INTRODUCTION}
Stochastic processes can be seen as probability distributions over functions. As such, they provide a starting point for Bayesian non-parametric regression. The most prominent example is the Gaussian process \citep{Rasmussen2006}, which is one of the most popular methods for nonlinear regression due to its flexibility, interpretability and probabilistic nature.

When a Gaussian process prior is combined with a Gaussian likelihood, the resulting marginal and conditional distributions have simple and exact expressions. But, if the underlying data is \emph{not} Gaussian, the conditional distribution can be seriously misleading. We describe a new family of elliptical processes that can adapt to such situations. 
\begin{figure}[t]
    \vspace{.3in}
    \centerline{\includegraphics[width=0.99\columnwidth]{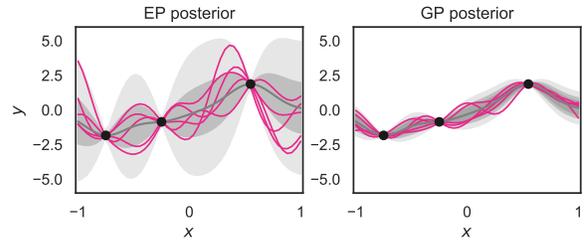}}
    \vspace{.1in}
    \caption{Posterior distributions of an elliptical process and a Gaussian process with the same kernel hyperparameters.}
    \label{fig:plot_of_elliptical_process}
\end{figure}
The elliptical processes subsumes the Gaussian process and the \mbox{Student-t} process \citep{Shah2014}. It is based on the elliptical distribution---a broad family of distributions that includes the Gaussian and \mbox{Student-t} distributions---which is attractive because it can describe fat-tailed distributions while retaining most of the Gaussian distribution's computational tractability \citep{Fang1990}. For these reasons, the elliptical distribution is widely used in finance, especially in portfolio theory \citep{Owen1983,Gupta2013}.
%Interestingly, there is a one-to-one correspondence between an elliptical process and a mixing distribution which, subject to some mild conditions, can be any probability distributions on the non-negative real numbers.

Every elliptical process corresponds to a mixing distribution. Interestingly, the converse also holds (subject to a few mild conditions)---any distribution over the non-negative real numbers gives rise to a unique elliptical process. To make this large class of stochastic processes manageable, we perform a detailed analysis of the elliptical process that results from a piecewise constant mixing distribution.

The added flexibility of elliptical processes could benefit a spectrum of applications. On one end is robust regression, where the conditional mean is the primary focus and outliers are considered a nuisance that should be de-emphasized. We exemplify this using the elliptical process on both synthetic and real-world data sets. At the other end of the spectrum are cases where the tail behavior---extreme values in particular---is the primary focus, for instance when modeling the probabilities of rare, but possibly extreme, events such as natural disasters, stock market crashes and global pandemics \citep{King2001, Ghil2011}. 

\section{RELATED WORK}

Attempts at making Gaussian process regression more robust can be broadly categorized into those modifying the likelihood and those modifying the stochastic process prior. In both cases, however, a natural first step is to replace the underlying Gaussian model with a Student-t model. Replacing the Gaussian likelihood with a Student-t likelihood, makes the regression more robust against outliers, but requires approximate inference \citep{Neal1997,Jylanki2011}.

The Student-t process can be defined as a scale-mixture of Gaussian processes \citep{OHagan1999}. As pointed out by \citet{Rasmussen2006}, this works well in the noise-free setting, but inclusion of independent noise comes at the cost of analytic tractability. \citet{Shah2014}, on the other hand, preserved analytic tractability by instead including noise in the covariance kernel (making the noise uncorrelated but not independent) and demonstrated the empirical effectiveness of this approach.

%\section{BACKGROUND}
%, so we will first give some background on them. 

%%%%%% ELLIPTICAL DISTRIBUTIONS %%%%
\section{ELLIPTICAL DISTRIBUTIONS}
Our new elliptical process is based on elliptical distributions, which are interesting because they include Gaussian distributions as well as more fat-tailed distributions. In this section, we review the relevant background on elliptical distributions.

Let $\mathbf{y}\in \mathbb{R}^n$ be a random variable that admits a probability density with respect to the Lebesgue measure. \citet{kelker1970distribution} showed that if $\mathbf{y}$ follows the elliptical distribution its density can be expressed as
    \begin{align}\label{eq:elliptical_density}
        p_{\boldsymbol{\theta}}(u)
        &=c_{n, \boldsymbol{\theta}}|\boldsymbol{\Sigma}|^{-1/2}g_{n, \boldsymbol{\theta}}(u),
    \end{align}
where $u = (\mathbf{y}-\boldsymbol{\mu})^T\boldsymbol{\Sigma}^{-1}(\mathbf{y}-\boldsymbol{\mu})$ is the squared Mahalanobis distance, $\boldsymbol{\mu}$ is the location vector, $\boldsymbol{\Sigma}$ is the scale matrix, and $c_{n, \boldsymbol{\theta}}$ is a normalization constant. The density generator $g_{n, \boldsymbol{\theta}}(u)$ is a non-negative function with finite integral, parameterized by $\boldsymbol{\theta}$, which determines the shape of the distribution. We recover the Gaussian distribution if $g_n(u) = \exp \left\{ -\frac{u}{2}\right \}$.

%\begin{definition}\label{def:elliptical_distribution}
% Suppose he random variable $y\in \mathbb{R}^n$ follows the $n$-dimensional elliptical distribution and admits a probability density with respect to the Lebesgue measure. Then the density can be expressed as
    % \begin{align}\label{eq:elliptical_density}
    %     p_{\boldsymbol{\theta}}(u_n)
    %     &=c_{n,\boldsymbol{\theta}}|\boldsymbol{\Sigma}|^{-1/2}g_{\boldsymbol{\theta}}(u_n),
    % \end{align}
    % where $u_n = (\mathbf{y}-\boldsymbol{\mu})^T\boldsymbol{\Sigma}^{-1}(\mathbf{y}-\boldsymbol{\mu})$ is the squared Mahalanobis distance, $\boldsymbol{\mu}$ is the location vector, $\boldsymbol{\Sigma}$ is the scale matrix, and $c_{n,   \boldsymbol{\theta}}$ is a normalization constant. The density generator $g_{\boldsymbol{\theta}}(u)$ is a non-negative function, parameterized by $\boldsymbol{\theta}$, which determines the shape of the distribution. 
%\end{definition}

\begin{figure}[ht]
    \centerline{\includegraphics[width=0.9\columnwidth]{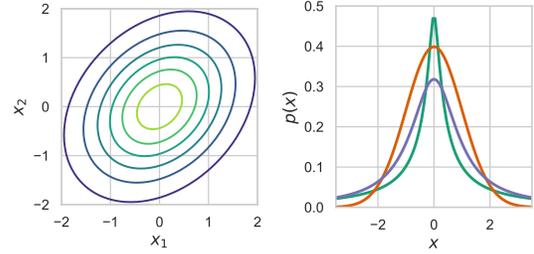}}
    \vspace{.1in}
    \caption{\textbf{Left:} A contour plot of an elliptical two dimensional, correlated, distribution with zero mean. The name derives from its elliptical level sets. \textbf{Right:} Three examples of one dimensional elliptical distributions with zero mean. A common feature of the elliptical distributions is that they are symmetric about the mean, $\E[\mathbf{X}] =\boldsymbol{\mu}$.}
    \label{fig:elliptic_distribution}
    \vspace{.1in}
\end{figure}

 %Even though there is a Gaussian process corresponding to the Gaussian distribution, not all elliptical distribution can be used to create a valid process. 

The Gaussian distribution---the basic building block of Gaussian processes---has several attractive properties that we wish the elliptical process to inherit:
\begin{itemize}
    \item Closure under marginalization: if we marginalize over a variable in a multivariate Gaussian distribution, the result is still a Gaussian distribution.
    \item Closure under conditioning: the conditional distribution is also a Gaussian distribution.
    \item Straightforward sampling. 
\end{itemize}
This leads us to consider the family of \emph{consistent} elliptical distributions. Following \citet{kano1994consistency}, we say that a family of elliptical distributions $\{p_\theta (u(\mathbf{y}_{n}))\,|\, n\in\mathbb{N}\}$ is consistent if and only if
\begin{equation}
    \int_{-\infty}^\infty p_{\boldsymbol{\theta}}\left( u(\mathbf{y}_{n+1})\right) d{y}_{n+1} = p_{\boldsymbol{\theta}}\left( u(\mathbf{y}_{n}) \right).
\end{equation}
In other words, a consistent elliptical distribution is closed under marginalization. 

Far from all elliptical distributions are consistent, but the complete characterization of those that are is given by the following theorem \citep{kano1994consistency}.
\begin{theorem}\label{theorem:consistent}
    An elliptical distribution is consistent if and only if it originates from the integral
    \begin{equation}\label{eq:consistence_folrmula}
        p_{\boldsymbol{\theta}}(u)  =|\boldsymbol{\Sigma}|^{-\frac{1}{2}} \int_0^{\infty} \left ( \frac{\xi}{2\pi}\right)^{\frac{n}{2}}e^{\frac{-u\xi}{2}}p_{\boldsymbol{\theta}}(\xi)d\xi,
    \end{equation}
    where $\xi$ is a mixing variable with the corresponding, strictly positive, mixing distribution $p_{\boldsymbol{\theta}}(\xi)$, that is independent of $n$ and satisfies $p_{\boldsymbol{\theta}}(0) = 0$.% and $\int_0^\infty p_{\boldsymbol{\theta}}(\xi)d\xi < \infty$. 
\end{theorem}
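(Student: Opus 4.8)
The plan is to prove the two implications separately: the ``if'' direction is a direct computation, and the ``only if'' direction carries the work. For the easy direction, assume $p_{\boldsymbol\theta}(u)=|\boldsymbol\Sigma|^{-1/2}\int_0^\infty(\xi/2\pi)^{n/2}e^{-u\xi/2}p_{\boldsymbol\theta}(\xi)\,d\xi$ with a mixing distribution $p_{\boldsymbol\theta}(\xi)$ that does not depend on $n$. For each fixed $\xi$ the integrand is, up to $|\boldsymbol\Sigma|^{-1/2}$, the density of $\mathcal{N}(\boldsymbol\mu,\xi^{-1}\boldsymbol\Sigma)$, and Gaussians marginalize to Gaussians. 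Concretely, writing $\boldsymbol\Sigma_{n+1}$ in block form with leading block $\boldsymbol\Sigma_n$ and Schur complement $\gamma$, one has $|\boldsymbol\Sigma_{n+1}|=\gamma|\boldsymbol\Sigma_n|$ and $\mathbf{y}^{T}\boldsymbol\Sigma_{n+1}^{-1}\mathbf{y}=u_n+\gamma^{-1}(y_{n+1}-m)^2$ with $m$ linear in $y_{1:n}$; after the substitution $t=\gamma^{-1/2}(y_{n+1}-m)$ the Schur factors cancel and $\int e^{-\xi t^2/2}(\xi/2\pi)^{1/2}dt=1$ turns $(\xi/2\pi)^{(n+1)/2}$ into $(\xi/2\pi)^{n/2}$. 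Exchanging $\int dy_{n+1}$ with $\int d\xi$ by Tonelli (all terms non-negative) gives the $n$-dimensional density; the only thing that matters is that the same $p_{\boldsymbol\theta}(\xi)$ appears in every dimension.

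For the converse, assume the family is consistent. It is the function $h_n(u):=c_{n,\boldsymbol\theta}\,g_{n,\boldsymbol\theta}(u)$, rather than $g_{n,\boldsymbol\theta}$ itself, that appears on the left of \eqref{eq:consistence_folrmula}, so I work with $h_n$ (and take $\boldsymbol\mu=0$ by translation). The block computation above shows the $\boldsymbol\Sigma$-dependence cancels from the marginalization identity, leaving a relation between radial profiles: integrating out one coordinate and setting $s=t^2$ gives $h_n(u)=\int_0^\infty h_{n+1}(u+s)\,s^{-1/2}\,ds$, and composing this identity with itself (substituting $w=s+s'$ and using $\int_0^w s^{-1/2}(w-s)^{-1/2}ds=\pi$) gives the cleaner $h_n(u)=\pi\int_u^\infty h_{n+2}(v)\,dv$. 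Iterating the latter $k$ times and using the Cauchy formula for repeated integration yields $h_n(u)=\pi^{k}\int_0^\infty \tfrac{t^{k-1}}{(k-1)!}\,h_{n+2k}(u+t)\,dt$; since each $h_{n+2k}$ is an honest density generator (hence locally integrable, with super-polynomial decay so the integral converges), this makes $h_n\in C^\infty(0,\infty)$ and gives $h_n^{(k)}(u)=(-1)^{k}\pi^{k}h_{n+2k}(u)\ge 0$ for all $k$. Thus every $h_n$ is completely monotone on $(0,\infty)$.

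By the Bernstein--Widder theorem, $h_n(u)=\int_{[0,\infty)}e^{-u\xi/2}\,d\nu_n(\xi)$ for a unique positive measure $\nu_n$. Feeding this into the one-coordinate identity and evaluating $\int_0^\infty e^{-s\xi/2}s^{-1/2}ds=\sqrt{2\pi/\xi}$ gives, by uniqueness, $d\nu_n(\xi)=\sqrt{2\pi}\,\xi^{-1/2}\,d\nu_{n+1}(\xi)$; iterating shows that $d\pi_{\boldsymbol\theta}(\xi):=(2\pi)^{n/2}\xi^{-n/2}\,d\nu_n(\xi)$ is one and the same measure for every $n$, so $h_n(u)=\int_0^\infty(\xi/2\pi)^{n/2}e^{-u\xi/2}\,d\pi_{\boldsymbol\theta}(\xi)$. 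Integrating $p_{\boldsymbol\theta}(u)=|\boldsymbol\Sigma|^{-1/2}h_n(u)$ over $\mathbb{R}^n$ collapses (by the same Gaussian integral) to $\pi_{\boldsymbol\theta}([0,\infty))$, which therefore equals $1$, so $\pi_{\boldsymbol\theta}$ is a probability measure. Finally, integrability of the density forces $h_n(u)\to 0$ as $u\to\infty$, hence $\nu_n(\{0\})=0$ and thus $\pi_{\boldsymbol\theta}(\{0\})=0$, i.e. $p_{\boldsymbol\theta}(0)=0$; writing $d\pi_{\boldsymbol\theta}(\xi)=p_{\boldsymbol\theta}(\xi)\,d\xi$ reproduces \eqref{eq:consistence_folrmula} exactly.

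The main obstacle is the middle step: extracting complete monotonicity from consistency. This rests entirely on the iterated marginalization identity together with the fact that every higher-dimensional generator is a genuine density (so the repeated integrals converge and $h_n$ is smooth), after which Bernstein does the rest. One technical caveat is that Bernstein's theorem a priori yields a measure $\nu_n$, so in full generality $p_{\boldsymbol\theta}(\xi)$ in \eqref{eq:consistence_folrmula} should be read as a mixing measure; its strict positivity and absolute continuity require a mild extra assumption. An alternative, more probabilistic route for the converse uses the Kolmogorov extension theorem to realize the consistent family as an infinite sequence whose finite-dimensional laws are spherically symmetric in every dimension, and then invokes the classical characterization (Schoenberg, Maxwell) of such sequences as scale mixtures of i.i.d.\ Gaussians, with $p_{\boldsymbol\theta}$ the law of the (reciprocal) common scale; this reaches \eqref{eq:consistence_folrmula} quickly but black-boxes more machinery.
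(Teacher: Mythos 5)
The paper does not prove this theorem at all: it is imported verbatim from \citet{kano1994consistency}, so there is no in-paper proof to compare against. Judged on its own, your argument is essentially correct and follows the classical Schoenberg--Kano line. The ``if'' direction (fix $\xi$, marginalize the Gaussian, swap integrals by Tonelli) is fine. The ``only if'' direction is the real content, and your chain of reductions holds up: the one-coordinate marginalization identity $h_n(u)=\int_0^\infty h_{n+1}(u+s)s^{-1/2}\,ds$, its self-composition via the Beta integral $\int_0^w s^{-1/2}(w-s)^{-1/2}\,ds=\pi$ to get $h_n(u)=\pi\int_u^\infty h_{n+2}(v)\,dv$, the iteration giving smoothness and $h_n^{(k)}=(-1)^k\pi^k h_{n+2k}\ge 0$, and then Bernstein--Widder plus the recursion $d\nu_n=\sqrt{2\pi}\,\xi^{-1/2}\,d\nu_{n+1}$ to manufacture an $n$-independent mixing measure. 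The convergence of the iterated integrals needs only that $\int_0^\infty v^{(n+2k)/2-1}h_{n+2k}(v)\,dv<\infty$ (integrability of the $(n+2k)$-dimensional density), which dominates the $t^{k-1}$ weight; your phrase ``super-polynomial decay'' overstates what is needed, but the step is sound. Two caveats, both of which you already flag: the argument presupposes the family is defined for \emph{all} $n$ (which the paper's definition of consistency does assume), and Bernstein's theorem delivers a mixing \emph{measure}, not a strictly positive density --- indeed the paper itself recovers the Gaussian from a Dirac mass, so the theorem's phrasing ``strictly positive mixing distribution $p_{\boldsymbol\theta}(\xi)$'' should really be read in the measure sense, with $p_{\boldsymbol\theta}(0)=0$ meaning no atom at the origin (which you correctly extract from $h_n(u)\to 0$). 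With that reading, your proof is complete and is, in substance, the standard proof of Kano's characterization.
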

$p_{\boldsymbol{\theta}}(u) $ is a scale mixture of Gaussian distributions, with the mixing variable $\xi \sim p_{\boldsymbol{\theta}}(\xi)$. Note that any choice of mixing distribution that fulfills the conditions in Theorem\,\ref{theorem:consistent} can be used to define a consistent elliptical process. We recover the Gaussian distribution if the mixing distribution is a Dirac delta function and the Student-t distribution if it is a scaled chi-square distribution. %Since it is consistent, this is also the marginal distribution.  

\section{ELLIPTICAL PROCESSES}
Similar to how Gaussian Processes are defined, we define an elliptical process as follows:
\begin{definition}\label{def:ellipticalProcess}
    An elliptical process ($\mathcal{EP}$) is a collection of random variables such that every finite subset has a consistent elliptical distribution, where the scale matrix is given by a covariance kernel. 
\end{definition}
This means that an elliptical process is specified by a mean function ${\mu}(\mathbf{x})$, scale matrix (kernel) $k(\mathbf{x}, \mathbf{x})$ and mixing distribution $p_{\boldsymbol{\theta}}(\xi)$.

Formally a stochastic process $\{X_t:t\in T\}$ on a probability space $(\Omega, \mathcal{F}, P)$ is a family of random maps $X_t:\Omega\to S_t$, $t\in T$, for measurable spaces $(S_t, \mathcal{S}_t)$, $t\in T$ \citep{Bhattacharya2007}. We focus on the setting where $S=\mathbb{R}$ and the index set $T$ is a subset of $\mathbb{R}^n$, in particular the halfline $[0, \infty)$.

That we may construct the elliptical process from the family of finite-dimensional, consistent, elliptical distributions follows from Kolmogorov's extension theorem, which is easy to check because of the restriction to $S=\mathbb{R}$ (which is a Polish space) and Kano's characterization above. 

%It supposes that $S_t$, $t\in T$, are Polish spaces with corresponding Borel $\sigma$-algebras $\mathcal{S}_t=\mathcal{B}(S_t)$. In our case this follows immediately since we consider $S_t=\mathbb{R}$. Further, it imposes two consistency conditions on the probability measures, both of which are easily checked using Kano's characterization above. %consistency conditions: (i) invariance to permutation of variables and corresponding product spaces and (ii) projective measure. The first follows as for GPs, the second is easily checked using Kano's consistency theorem.
Since the elliptical process is built upon consistent elliptical distributions it's closed under marginalization. The marginal mean $\boldsymbol{\mu}$ is the same as the mean for the Gaussian distribution. The covariance is defined by: 
\begin{equation}
    \Cov[\mathbf{Y}] = \E\left[\xi^{-1}\right]\boldsymbol{\Sigma},
\end{equation}
where $\boldsymbol{\Sigma}$ is the covariance for a Gaussian distribution and $\xi$ is the mixture variable in Equation \ref{eq:consistence_folrmula}. 

\subsection{Conditional Distribution}
To use the elliptical process for regression, we need the conditional mean and covariance of the corresponding elliptical distribution, which we derive next.

We partition the data as $\mathbf{y}=[\mathbf{y}_1, \mathbf{y}_2]$, where $\mathbf{y}_1$ are the $n_1$ observed data points, $\mathbf{y}_2$ are the $n_2$ data points to predict, and $n_1 + n_2 = n$. We have the following result:

\begin{prop}\label{prop:conditional_distribution}
    If the data $\mathbf{y}=[\mathbf{y}_1, \mathbf{y}_2]$ originate from the consistent elliptical distribution in equation \eqref{eq:consistence_folrmula}, the conditional distribution originates from the distribution
    \begin{equation}
        \resizebox{.96\hsize}{!}{$p_{\mathbf{y_2}|u_1}(\mathbf{y}_2)  =
        \frac{c_{n_1,\boldsymbol{\theta}}}{\left|\boldsymbol{\Sigma}_{22|1}\right|^{\frac{1}{2}}(2\pi)^{\frac{n_2}{2}}} \int_0^{\infty}\xi^{\frac{n}{2}}e^{-(u_{2|1} + u_1)\frac{\xi}{2}}p_{\boldsymbol{\theta}}(\xi)d\xi$}
    \end{equation}
    with the conditional mean
    \begin{equation}
        \E[\boldsymbol{y_1}|\boldsymbol{y}_2] = \boldsymbol{\mu}_{2|1} 
    \end{equation}
    and the conditional covariance
    \begin{equation}
        \Cov[\boldsymbol{Y}_1|\boldsymbol{Y}_2 = \boldsymbol{y}_2] = \E [\hat{\xi}^{-1}]\boldsymbol{\Sigma}_{22|1}, \ \ \hat{\xi}\sim \xi | \mathbf{y}_1.
    \end{equation}
    where $u_1 = (\mathbf{y}_1-\boldsymbol{\mu}_1)^T\boldsymbol{\Sigma}_{11}^{-1}(\mathbf{y}_1-\boldsymbol{\mu}_1)$, $u_{2|1} = (\mathbf{y}_2-\boldsymbol{\mu}_{2|1})^T\Sigma_{22|1}^{-1}(\mathbf{y}_2-\boldsymbol{\mu}_{2|1})$, and $c_{n_1,\boldsymbol{\theta}}$ is a normalization constant.
\end{prop}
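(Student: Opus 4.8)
The plan is to push the entire computation through the scale-mixture representation of Theorem~\ref{theorem:consistent}, so that the only analytic input is ordinary Gaussian conditioning. Conditional on the mixing variable $\xi$, the stacked vector is Gaussian, $[\mathbf{y}_1,\mathbf{y}_2]\mid\xi\sim\mathcal{N}(\boldsymbol{\mu},\xi^{-1}\boldsymbol{\Sigma})$ with $\xi\sim p_{\boldsymbol{\theta}}(\xi)$, and the joint elliptical density \eqref{eq:consistence_folrmula} is exactly $\int_0^{\infty}\mathcal{N}(\mathbf{y};\boldsymbol{\mu},\xi^{-1}\boldsymbol{\Sigma})\,p_{\boldsymbol{\theta}}(\xi)\,d\xi$. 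First I would record the block identities for this Gaussian: with the Schur complement $\boldsymbol{\Sigma}_{22|1}=\boldsymbol{\Sigma}_{22}-\boldsymbol{\Sigma}_{21}\boldsymbol{\Sigma}_{11}^{-1}\boldsymbol{\Sigma}_{12}$ one has $|\boldsymbol{\Sigma}|=|\boldsymbol{\Sigma}_{11}|\,|\boldsymbol{\Sigma}_{22|1}|$ and, by completing the square, $u=u_1+u_{2|1}$ with $\boldsymbol{\mu}_{2|1}=\boldsymbol{\mu}_2+\boldsymbol{\Sigma}_{21}\boldsymbol{\Sigma}_{11}^{-1}(\mathbf{y}_1-\boldsymbol{\mu}_1)$. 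The key point is that the $\xi^{-1}$ in the conditional covariance cancels inside $\boldsymbol{\Sigma}_{21}\boldsymbol{\Sigma}_{11}^{-1}$, so $\boldsymbol{\mu}_{2|1}$ does not depend on $\xi$; hence for every $\xi$ the Gaussian factors as $\mathcal{N}(\mathbf{y};\boldsymbol{\mu},\xi^{-1}\boldsymbol{\Sigma})=\mathcal{N}(\mathbf{y}_1;\boldsymbol{\mu}_1,\xi^{-1}\boldsymbol{\Sigma}_{11})\,\mathcal{N}(\mathbf{y}_2;\boldsymbol{\mu}_{2|1},\xi^{-1}\boldsymbol{\Sigma}_{22|1})$.

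Next I would obtain the conditional density from $p(\mathbf{y}_2\mid\mathbf{y}_1)=p(\mathbf{y})/p(\mathbf{y}_1)$. By consistency (Theorem~\ref{theorem:consistent}) the marginal is again a member of the family, $p(\mathbf{y}_1)=|\boldsymbol{\Sigma}_{11}|^{-1/2}\int_0^{\infty}(\xi/2\pi)^{n_1/2}e^{-u_1\xi/2}p_{\boldsymbol{\theta}}(\xi)\,d\xi$. Substituting the mixture representations of numerator and denominator, using $|\boldsymbol{\Sigma}|=|\boldsymbol{\Sigma}_{11}|\,|\boldsymbol{\Sigma}_{22|1}|$, $u=u_1+u_{2|1}$ and $\xi^{n/2}=\xi^{n_1/2}\xi^{n_2/2}$, the $\mathbf{y}_1$-determinant and $(2\pi)^{n_1/2}$ factors cancel, and the reciprocal of the $\mathbf{y}_1$-mixing integral is collected into the (data-dependent) constant $c_{n_1,\boldsymbol{\theta}}$, leaving precisely the claimed formula for $p_{\mathbf{y}_2\mid u_1}(\mathbf{y}_2)$. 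All interchanges of integration are legitimate by non-negativity (Tonelli). An equivalent, more probabilistic reading of the same identity is $p(\mathbf{y}_2\mid\mathbf{y}_1)=\int_0^{\infty}\mathcal{N}(\mathbf{y}_2;\boldsymbol{\mu}_{2|1},\xi^{-1}\boldsymbol{\Sigma}_{22|1})\,p(\xi\mid\mathbf{y}_1)\,d\xi$, where $p(\xi\mid\mathbf{y}_1)\propto\mathcal{N}(\mathbf{y}_1;\boldsymbol{\mu}_1,\xi^{-1}\boldsymbol{\Sigma}_{11})\,p_{\boldsymbol{\theta}}(\xi)$ is the posterior of the mixing variable; it is a genuine probability density precisely because its normalizer is $p(\mathbf{y}_1)$, which is where consistency enters.

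For the moments I would use the tower rule with respect to this posterior, writing $\hat{\xi}$ for a draw from $p(\xi\mid\mathbf{y}_1)$. Since the inner Gaussian conditional mean equals $\boldsymbol{\mu}_{2|1}$ for every value of $\xi$, averaging over $\hat{\xi}$ gives $\E[\mathbf{Y}_2\mid\mathbf{Y}_1=\mathbf{y}_1]=\boldsymbol{\mu}_{2|1}$. For the covariance, the law of total covariance (over $\hat{\xi}$) gives $\Cov[\mathbf{Y}_2\mid\mathbf{Y}_1=\mathbf{y}_1]=\E\big[\Cov(\mathbf{Y}_2\mid\mathbf{Y}_1,\hat{\xi})\big]+\Cov\big(\E[\mathbf{Y}_2\mid\mathbf{Y}_1,\hat{\xi}]\big)$; the second term vanishes because $\boldsymbol{\mu}_{2|1}$ is constant in $\hat{\xi}$, and the first equals $\E[\hat{\xi}^{-1}]\,\boldsymbol{\Sigma}_{22|1}$, as stated.

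The calculations themselves are routine; the only genuinely delicate step is the normalization bookkeeping in the second paragraph --- checking that the factor produced by dividing by $p(\mathbf{y}_1)$ is exactly what turns $\mathcal{N}(\mathbf{y}_1;\cdot)\,p_{\boldsymbol{\theta}}(\xi)$ into a probability density in $\xi$, which is what legitimizes both the mixture reading of the conditional and the tower-rule argument for the moments. One should also flag the mild moment assumption $\E[\hat{\xi}^{-1}]<\infty$ (equivalently $\E[\xi^{-1}]<\infty$) needed for the covariance statement to be meaningful, mirroring the condition already implicit in the marginal identity $\Cov[\mathbf{Y}]=\E[\xi^{-1}]\boldsymbol{\Sigma}$.
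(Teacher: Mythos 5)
Your proof is correct and follows essentially the same route as the paper's: the paper conditions on $\xi$, applies Gaussian conditioning, obtains the posterior $p(\xi\mid\mathbf{y}_1)\propto\xi^{n_1/2}e^{-\xi u_1/2}p_{\boldsymbol{\theta}}(\xi)$ by Bayes, and remixes --- which is exactly the ``probabilistic reading'' you give, with your primary ratio derivation $p(\mathbf{y})/p(\mathbf{y}_1)$ being the same computation organized differently. Your explicit use of the law of total covariance and the remark on the moment condition $\E[\hat{\xi}^{-1}]<\infty$ are slightly more careful than the paper's presentation but do not change the argument.
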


\begin{proof}
    See Appendix \ref{appendix:conditional_dist}.
\end{proof}
The conditional distribution is guaranteed to be a consistent elliptical distribution, but not necessarily the same as the original one. (Recall that consistency only concerns the marginal distribution). Since the conditional distribution depend on $u_1$ and $n_1$, the shape depends on the training samples.

The conditional scale matrix $\boldsymbol{\Sigma}_{22|1}$ and the conditional mean vector $\boldsymbol{\mu}_{2|1}$ are the same as the mean and the covariance matrix for a Gaussian distribution. To get the covariance for our elliptical distribution we multiply the scale matrix with a constant that depends on the training data $\mathbf{y}_1$, namely $\E\left[\hat{\xi}^{-1}\right]$. This is the same behavior as for the Student-t process \citep{Shah2014}, but the constant differs.

Equipped with the conditional mean and covariance for the elliptical distribution we can make predictions on unseen data points. 

\subsection{Training}
Gaussian processes are widely used for regression. They are convenient to work with because the combination of a Gaussian process prior with a Gaussian likelihood remains a Gaussian process. Unfortunately, this closure property does \emph{not} hold for elliptical distributions in general. To add noise, we either have to simplify the model or use approximate inference. 

In this paper we use the same procedure as in \citet{Shah2014}, namely to add noise to the kernel, 

\begin{equation}\label{eq:delta_noise_kernel}
    \boldsymbol{\Sigma} = \mathbf{K} + \epsilon\,\mathbf{I}.
\end{equation}

Here, $\mathbf{K}_{ij} = {k}(\mathbf{x}_i, \mathbf{x}_j)$ is the kernel matrix and $\epsilon$ is the noise: a constant added to the diagonal elements of $\boldsymbol{\Sigma}$. According to the law of total expectation, this makes the noise not independent, but uncorrelated, with the latent function (see Appendix \ref{appendix:added_noise}).

\begin{figure*}[t!]
    \vskip 0.01in
    \centerline{\includegraphics[width=1.5\columnwidth]{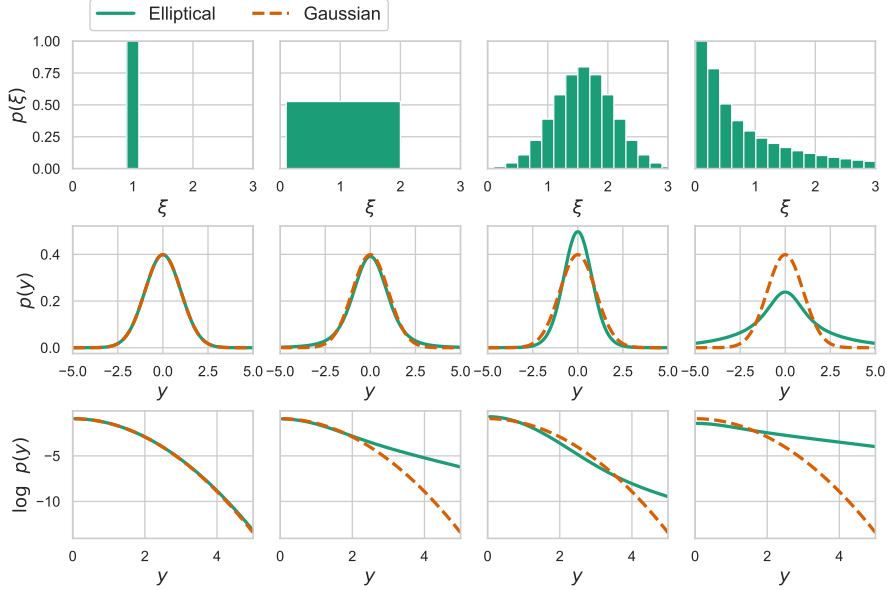}}
    \vskip 0.1in
    \caption{Different mixing distributions $p_{\boldsymbol{\theta}}(\xi)$ (\textbf{top}) correspond to different elliptical distributions $p_{\boldsymbol{\theta}}(y)$ (\textbf{middle}) with fatter tails than the Gaussian distribution, which is evident in log-scale (\textbf{bottom}).} %\textbf{Top} shows the mixing distribution $p_{\boldsymbol{\theta}}(\xi)$, \textbf{Middle} shows the shape of the elliptical distribution $p_{\boldsymbol{\theta}}(x)$, compared with the shape of the Gaussian distribution and \textbf{Bottom} shows $\log p_{\boldsymbol{\theta}}(x)$ compared with the log of the Gaussian tail. }
    \label{fig:examples_elliptical}
\end{figure*}

For a consistent elliptical distribution $\Cov(\mathbf{X}) = \E\left [\xi^{-1}\right] \boldsymbol{\Sigma}$, so both the noise and the kernel matrix are scaled by $\E\left [ \xi^{-1}\right]$. For the Gaussian distribution, $\xi$ follows a Dirac delta function, $\delta(\xi-1)$, so $\E\left [ \xi^{-1}\right] =1$ and $\Cov(\mathbf{X})= \boldsymbol{\Sigma}$ (leading to independent noise).

Thanks to the simplified noise model we can train the model by maximizing the (exact) marginal likelihood.

%%% Example: Piecewise constant mixing distribution %%%
\section{EXAMPLE: PIECEWISE CONSTANT MIXING DISTRIBUTION}\label{sec:PiecewiseConstant}

Based on the observation that the integral in Theorem\,\ref{theorem:consistent} is effectively a Gamma function when the mixing distribution $p_{\boldsymbol{\theta}}(\xi)$ is constant, we construct a mixing distribution by stacking piecewise constant functions next to each other, as illustrated in Figure \ref{fig:elliptical_distribution}. 

\begin{figure}[ht]
    \vskip 0.2in
    \centerline{\includegraphics[width=0.8\columnwidth]{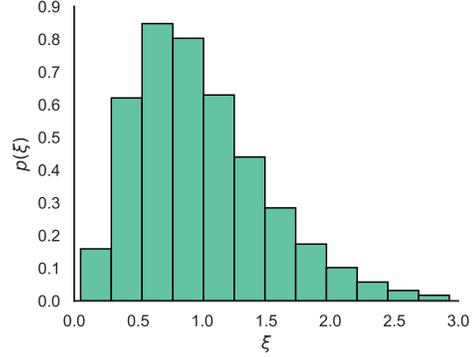}}
    \vspace{.1in}
    \caption{A piecewise constant mixing distribution $p_{\boldsymbol{\theta}}(\xi)$ with 12 pieces.}% next to each other. Since the total area of the distribution is one,  $p_{\boldsymbol{\theta}}(\xi)$ is normalized.}
    \label{fig:elliptical_distribution}
    \vspace{.15in}
\end{figure}

Each piece has constant width and variable height. Provided that $p_{\boldsymbol{\theta}}(0) = 0$ we may approximate any desired shape of $p_{\boldsymbol{\theta}}(\xi)$. The mixing distribution with $M$ pieces of heights $h_1, \hdots, h_M$, width $\Delta$ and starting position $\ell_0$ is given by

\begin{equation}\label{eq:elliptical_dist}
    p_{\boldsymbol{\theta}}(\xi) = \frac{1}{\Delta\sum_k h_k}\sum_{k=1}^M h_k \mathbbm{1}_{(\ell_{k-1}, \ell_k)}(\xi),
\end{equation}

where $\ell_k = \ell_0 + k\Delta$ and, $\mathbbm{1}_{(a, b)}(\xi) = 1$ when $a<\xi < b$ and zero otherwise. To modify the distribution, we change the heights $h_1, \hdots, h_M $.

The corresponding consistent elliptical distribution is (see Appendix \ref{appendix:deriving_sq_box_dis})
%From the piecewise constant mixing distribution \eqref{eq:elliptical_dist} and the consistency criteria \eqref{eq:consistence_folrmula} we present a new consistent elliptical process (See Appendix \ref{appendix:deriving_sq_box_dis} for details) that originates from the consistent elliptical distribution
\begin{equation}\label{eq:marginalLikelihoodPiecewise}
        p_{\boldsymbol{\theta}}(u) = \frac{2u^{-(\frac{n}{2}+1)}}{ |\Sigma|^{\frac{1}{2}}\pi^{\frac{n}{2}}\Delta}
        \Phi_{M, \mathbf{h}, \boldsymbol{\ell}}\left (\frac{n}{2} + 1, u\right ),
\end{equation}
where
\begin{equation}
        \Phi_{M, \mathbf{h}, \boldsymbol{\ell}}(s, u) = \sum_{k=1}^M  h_k\Gamma\left (s,\frac{\ell_{k-1}u}{2}, \frac{\ell_k u}{2} \right)
\end{equation}
and $\Gamma(\cdot,\cdot,\cdot)$ is the incomplete Gamma function. 

Figure \ref{fig:examples_elliptical} shows examples of piecewise constant mixing distribution $p_{\boldsymbol{\theta}}(\xi)$, with corresponding elliptical distributions $p_{\boldsymbol{\theta}}(y)$ (recall that $u = (\mathbf{y}-\boldsymbol{\mu})^T\boldsymbol{\Sigma}^{-1}(\mathbf{y}-\boldsymbol{\mu})$ ) and the logarithm of their tails.

%%%%%%% Gradient-based training %%%%
We train the elliptical process by minimizing the negative logarithm of the marginal likelihood in Equation \eqref{eq:marginalLikelihoodPiecewise} with the addition of a smoothness penalty on the mixing distribution parameters $h_1,\hdots, h_M$ (see Appendix \ref{appendix:implementation}).
% \begin{align}
%     -\log p_{\boldsymbol{\theta}}({u_n}) &= -\log(2)+ \frac{n}{2}\log \pi +\left(\frac{n}{2}+1  \right)\log u  \nonumber \\
%     &+ \frac{1}{2} \log |\Sigma| + \log\left ( \Delta\sum_k h_k\right ) \nonumber \\
%     &-\log\left (\Phi_{M, \mathbf{h}, \boldsymbol{\ell}}\left (\frac{n}{2}+1, u_n\right) \right ).\nonumber \\
%     \label{eq:log_lokelihood}
% \end{align}
Those are trained jointly with the kernel parameters $\boldsymbol{\lambda}$ and the noise level $\epsilon$. 

The elliptical process may be used for classification and generative modelling, but in this paper, we mainly focus on regression, using the following expressions for the conditional mean and covariance (see Appendix \ref{appendix:deriving_sq_box_dis}):
\begin{subequations}
        \begin{align}
        \E[\mathbf{y}_2|\mathbf{y}_1] &= \boldsymbol{\mu}_{2|1}, \\
        %\intertext{and the conditional covariance}
        \Cov(\mathbf{y}_2| \mathbf{y}_1)&= \frac{u_1}{2}\frac{ \Phi_{M, \mathbf{h}, \boldsymbol{\ell}}\left (\frac{n_1}{2}, u_1\right )}{
       \Phi_{M, \mathbf{h}, \boldsymbol{\ell}}\left (\frac{n_1}{2} +1, u_1\right )} \boldsymbol{\Sigma}_{22|1},\\
        \intertext{where}
        u_1 &= (\mathbf{y}_1-\boldsymbol{\mu}_1)^T\boldsymbol{\Sigma}_{11}^{-1}(\mathbf{y}_1-\boldsymbol{\mu}_1), \\
        \boldsymbol{\mu}_{2|1} &= \boldsymbol{\mu}_2 +\boldsymbol{\Sigma}_{21}\boldsymbol{\Sigma}_{11}^{-1}(\mathbf{y}_1-\boldsymbol{\mu}_{1}),\\
        \boldsymbol{\Sigma}_{22|1} &= \boldsymbol{\Sigma}_{22}- \boldsymbol{\Sigma}_{21}\boldsymbol{\Sigma}_{11}^{-1}\boldsymbol{\Sigma}_{12}.
        \end{align}
\end{subequations}

% \begin{prop}
%     The elliptical process with piecewise constant mixing distribution has the conditional mean
%     \begin{subequations}
%         \begin{align}
%         \E[\mathbf{y}_2|\mathbf{y}_1] &= \boldsymbol{\mu}_{2|1}, \\
%         \intertext{and the conditional covariance}
%         \Cov(\mathbf{y}_2| \mathbf{y}_1)&= \frac{u_1}{2}\frac{ \Phi_{M, \mathbf{h}, \boldsymbol{\ell}}\left (\frac{n_1}{2}, u_1\right )}{
%       \Phi_{M, \mathbf{h}, \boldsymbol{\ell}}\left (\frac{n_1}{2} +1, u_1\right )} \boldsymbol{\Sigma}_{22|1},\\
%         \intertext{where}
%         u_1 &= (\mathbf{y}_1-\boldsymbol{\mu}_1)^T\boldsymbol{\Sigma}_{11}^{-1}(\mathbf{y}_1-\boldsymbol{\mu}_1), \\
%         \boldsymbol{\mu}_{2|1} &= \boldsymbol{\mu}_2 +\boldsymbol{\Sigma}_{21}\boldsymbol{\Sigma}_{11}^{-1}(\mathbf{y}_1-\boldsymbol{\mu}_{1}),\\
%         \boldsymbol{\Sigma}_{22|1} &= \boldsymbol{\Sigma}_{22}- \boldsymbol{\Sigma}_{21}\boldsymbol{\Sigma}_{11}^{-1}\boldsymbol{\Sigma}_{12}.
%         \end{align}
%     \end{subequations}
% \end{prop}

% \begin{proof}
%     See appendix \ref{appendix:deriving_sq_box_dis}.
% \end{proof}

%%%% Experiments %%%%%%%%%%
\section{EXPERIMENTS}
We examined the elliptical process with piecewise constant mixing distribution, described in Section \ref{sec:PiecewiseConstant}, on synthetic and real-world data.  

%%% Implementation
\subsection{Implementation}
In the experiments we set the number of pieces to $M=10$, the box lengths to $\Delta = 0.2$, and the starting position to $\ell_0 = 0.01$. %We found this to give a reasonable tradeoff between model capacity and complexity. %We decided to use 10 pieces on the $\xi$-distribution since it gave flexibility to model while restricting the number of parameters to optimize. 
This means that there were 10 parameters of the mixing distribution to optimize, $h_1, \hdots, h_{10}$, in addition to the kernel hyperparameters and the noise variance $\epsilon$.

In all experiments we used a squared exponential kernel together with a Kronecker delta function, as in equation \eqref{eq:delta_noise_kernel}. We trained our model by minimizing the negative logarithm of the marginal likelihood in equation \eqref{eq:marginalLikelihoodPiecewise} using the Adam optimizer \citep{kingma2014adam}. See Appendix \ref{appendix:implementation} for further details on the implementation. The code for the experiments will be published on GitHub if the paper is accepted.

%%%%% Investigate tail behavior %%%%
\subsection{Investigating the Tail Behavior}
\begin{figure}[t!]
\centering
\begin{subfigure}{\columnwidth}
\caption{Scaled chi-square distribution}\label{fig:qq_plot_1}
  \includegraphics[width=0.9\linewidth]{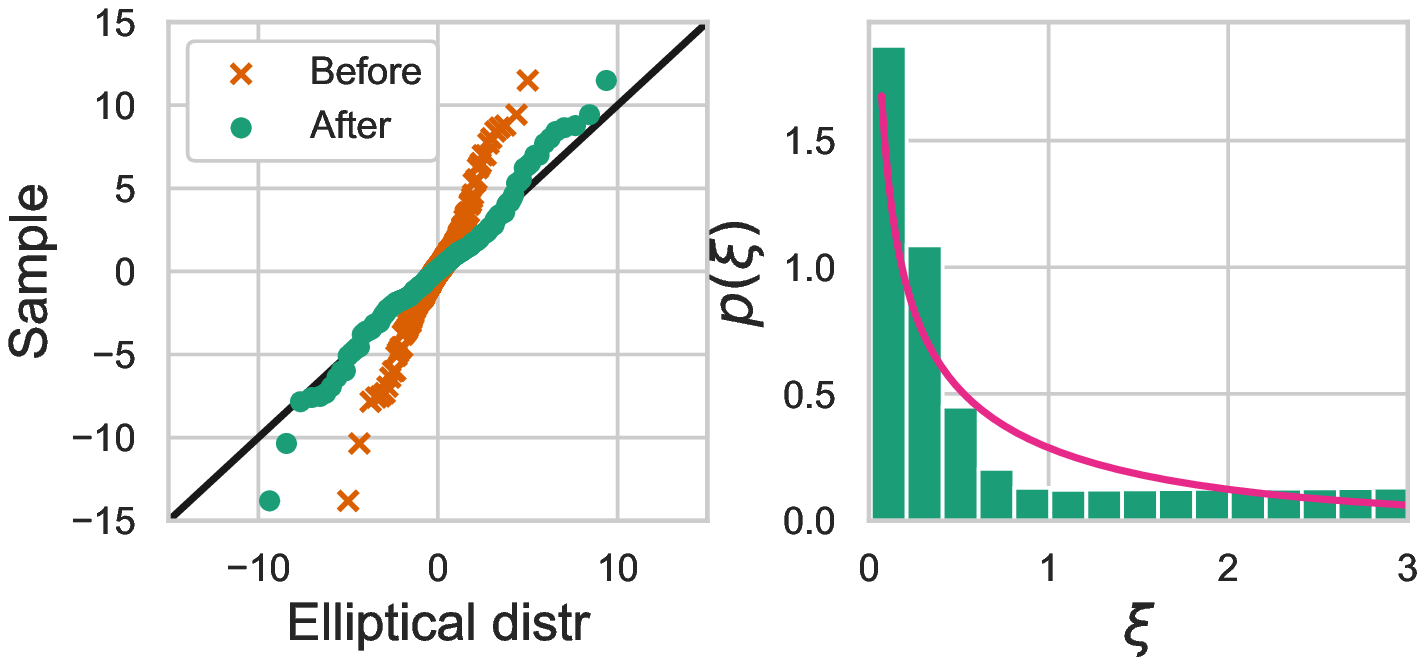}
\end{subfigure}

\begin{subfigure}{\columnwidth}
\caption{Truncated Laplace distribution}\label{fig:qq_plot_2}
  \includegraphics[width=0.9\linewidth]{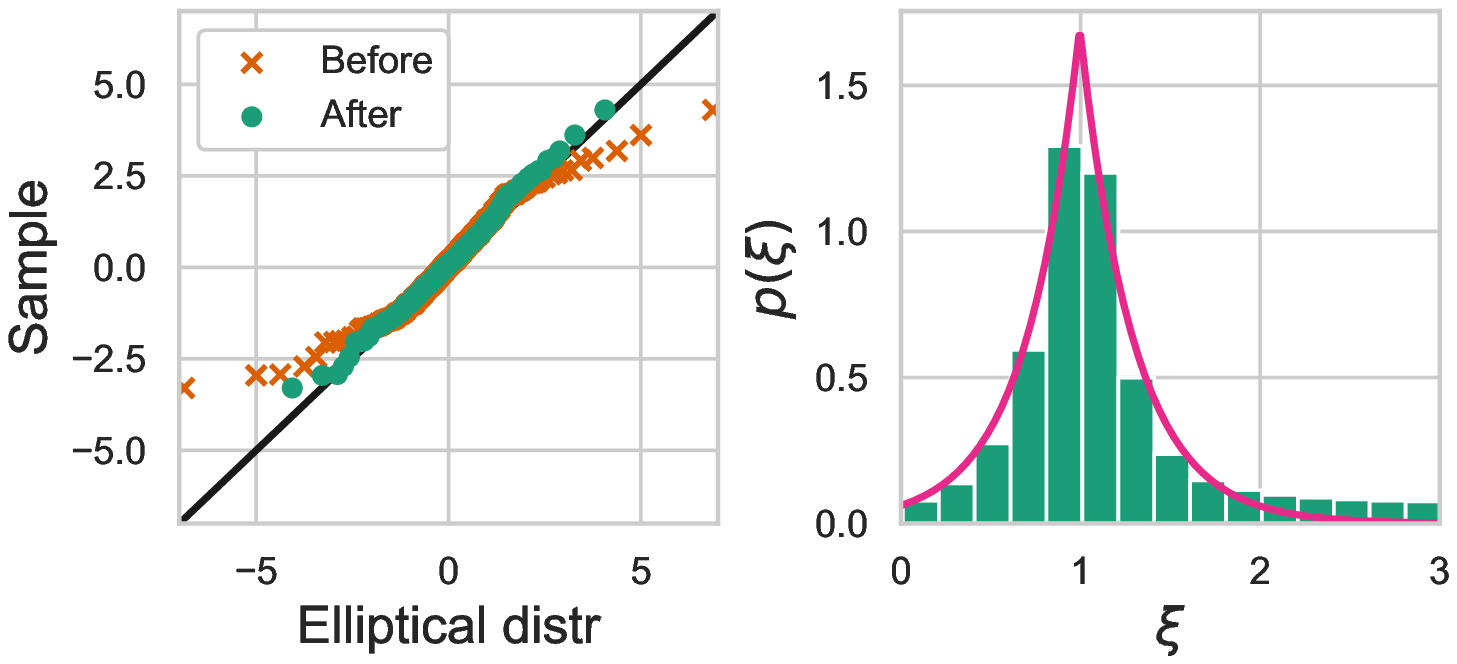}
\end{subfigure}
\vskip 0.2in
\caption{Tail behavior of the elliptical distribution with piecewise constant mixing distribution, when trained on synthetic data with known mixing distribution. The left column shows quantile-quantile plots of the model quantiles against the sample quantiles, before and after training. The right column shows the mixing distribution of the data generator and the piecewise constant mixing distribution after training.}
\label{fig:qq_plot}
\end{figure}
We wanted to investigate how well the elliptical distribution with piecewise constant mixing distribution captures fat-tailed behavior. We created two sets of training data by drawing 200 samples each from two elliptical distributions with different mixing distributions (see Figure \ref{fig:qq_plot}): a scaled chi-square distribution with scale parameter $\eta = 1$ and a Laplace distribution with its mode at $\xi=1$. Both distributions were truncated to be strictly larger than zero. %We took 200 samples each from the two distributions as our training data.  

We initialized the model with a uniform mixing distribution, i.e. all $h_i$ equal, and trained it by minimizing the Kullback-Leibler divergence between the training data and the model. Figure \ref{fig:qq_plot} shows the mixing distribution after training. Clearly, the training aligns the model quantiles to the sample quantiles on a relatively straight line, which indicates that the tail behavior of both data sets was captured reasonably well.

%We found that the tail behavior of both data sets was captured reasonably well. Figure \ref{fig:qq_plot_1} shows that before training, when the mixture distribution was uniform, the model quantiles were smaller than the sample quantiles, but after training the quantiles were similar and followed a relatively straight line.  
%In Figure \ref{fig:qq_plot_2} on the other hand the model quantiles are larger than the quantiles of the training data. After training they followed a straight line and the quantiles are similar.

\begin{figure*}[t!]
    \vskip 0.05in
    \centerline{\includegraphics[width=1.7\columnwidth]{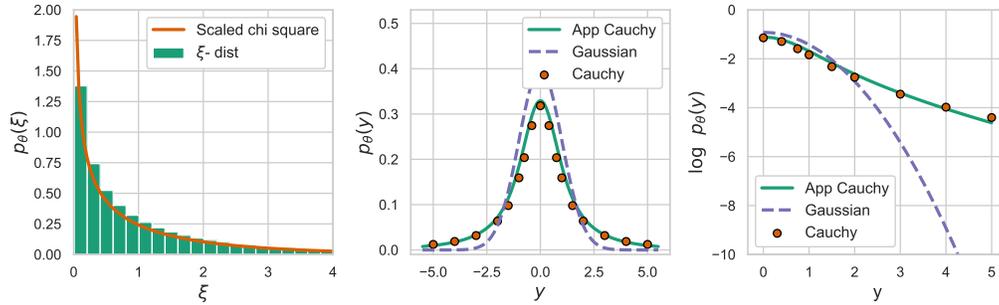}}
    \vskip 0.2in
    \caption{\textbf{Left}: approximation of the scaled chi-square distribution with a piecewise constant distribution. \textbf{Center:} comparison of the elliptical distribution that approximates the Cauchy distribution to the true Cauchy distribution. \textbf{Right:} comparison, in log-scale, of the tails of the elliptical distribution, the true Cauchy distribution and the Gaussian distribution. We can see that the elliptical distribution is close to (but not identical to) the true Cauchy distribution.}
    \label{fig:Cauchy_pdf}
\end{figure*}

%%% The Approximated Cauchy Process %%%
\subsection{The Approximated Cauchy Process}

In this section we used the elliptical process with the parameters $h_1, \hdots, h_M$ fixed. This may be useful if we have some a priori knowledge of the noise characteristics or if we seek a specific behavior from the process. For example, to get a robust stochastic process we can approximate a Cauchy process with the elliptical process.

\begin{figure}[htbp]
    \vskip 0.1in
    \centerline{\includegraphics[width=0.95\columnwidth]{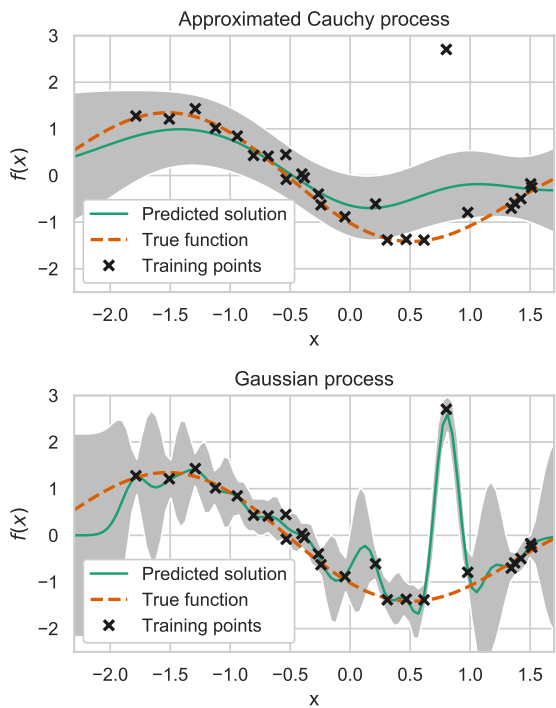}}
    \vskip 0.1in
    \caption{Example of the approximated Cauchy process for data, corrupted with outliers. \textbf{Top:} optimized solution for the approximate Cauchy process. \textbf{Bottom:} optimized solution for the Gaussian process. Both models were trained by minimizing the log-likelihood. The Gaussian process is more prone to overfit to the outliers while the Cauchy process perceive them as noise. The grey area is the 95 \% confidence interval.}
    \label{fig:Cauchy_toy}
\end{figure}

The Cauchy distribution is in a sense a pathological distribution, since both its expected value and covariance are undefined. Nevertheless we can approximate it, as shown in Figure \ref{fig:Cauchy_pdf}, by letting the mixing distribution approximate a scaled chi-square distribution with scale parameter $\eta = 1$. To get a valid expected value and covariance we let $\ell_0$ be a tiny positive number.

Figure \ref{fig:Cauchy_toy} illustrates how the approximated Cauchy process behaved when the data were corrupted with outliers (Cauchy noise). The fat tail made it tolerate outliers far away from the predicted expected value and, in contrast to the Gaussian process, the Cauchy process found a solution close to the true function. 

We compared the approximate Cauchy process with the elliptical process when $h_i$ were free and the Gaussian process on three different synthetic data sets (explained in the next section).

We see in Table \ref{table:results_synthetic} that the Cauchy process performed best on average out of the models when the training data were corrupted with Cauchy noise. When $\eta = 8$, all three models had comparable results and the benefit of using the approximated Cauchy process or the elliptical process had diminished. 

\begin{table*}[t!]
    \small
    \caption{The predictive Mean Square Error (MSE) and the test log-likelihood (LL) from the experiments with the synthetic data sets. We show the mean and the standard deviation from the 100 different runs.}
    \label{table:results_synthetic}
    \begin{center}
        \begin{tabular}{lllllll}
            \multicolumn{1}{c}{\bf }  
            &\multicolumn{2}{c}{\bf GP} &\multicolumn{2}{c}{\bf EP} & \multicolumn{2}{c}{\bf CaP}\\
            \multicolumn{1}{c}{\bf DATA SET}  
            &\multicolumn{1}{c}{ MSE}
            &\multicolumn{1}{c}{ LL}
            &\multicolumn{1}{c}{ MSE}
            &\multicolumn{1}{c}{ LL}
            &\multicolumn{1}{c}{ MSE}
            &\multicolumn{1}{c}{ LL}\\
            \hline \\
            Synth $\eta = 1$    & $0.29 \pm 0.64 $ & $-0.90\pm 0.11 $ &$ {0.068 \pm 0.14} $ & $-0.68 \pm 0.061 $  & ${0.068\pm 0.14}$ & ${-0.52 \pm 0.076}$\\
            Synth $\eta = 3$    &$ 0.15 \pm 0.16 $ & $-0.92 \pm 0.079 $  & ${0.13 \pm 0.09 }$ & $-0.73\pm 0.12 $ & ${0.13 \pm 0.09}$ & ${-0.59 \pm 0.14}$ \\
            Synth $\eta = 8$    &$ 0.13 \pm 0.09 $ & $-0.88 \pm 0.11 $  & ${0.13 \pm 0.10} $ & $-0.69\pm 0.15$ & $0.13\pm 0.09 $ & ${-0.56 \pm 0.17}$ \\
        \end{tabular}
    \end{center}
\end{table*}

\subsection{Regression On Synthetic And Real World Data Sets}
We tested the elliptical process on synthetic and real-world data sets. We especially wanted to test the model on data sets with outliers. The six data sets were:

\textbf{Synthetic data set:} We drew 100 samples from a Gaussian process prior, where 50 samples are used for training and 50 samples for testing. We added Student-t noise to the training data with three different values on the scale parameter $\eta = 1, 3, 10$. We trained the elliptical process with free $h_i$ and on the approximated Cauchy process explained in the previous section. 

\textbf{The Neal data set}, is a synthetic data set first proposed by \citet{Neal1997} as a data set containing target outliers. The target value was corrupted with Gaussian noise with standard deviation 0.1. 5 \% of the targets were replaced by outliers sampled form a Gaussian distribution. We generated 100 training samples and tested the model on 100 noise free samples.

\textbf{The Friedman data set}, proposed by \citet{friedman1991multivariate} is a synthetic data set derived from a non-linear function with 5 inputs. On top of that, 5 unrelated inputs are added which makes the input features 10-dimensional. We generated 100 training samples and tested the model on 100 noise free samples.%We added 10 outliers to the Friedman data set as suggested by \citet{kuss2006gaussian}. We generated 100 training samples and tested the model on 100 noise free samples.

%\textbf{The red wine data set}  \citep{cortez2009modeling} has 11 input features and one target variable, namely the wine quality. In total, the data set consist of 1599 samples. We randomly select 360 samples as our training data and 40 samples as our testing data. 

\textbf{Spatial Interpolation data set} was originally published by \citet{Dubois}. The data set consists of 467 daily rainfall measurements made in Switzerland in May 1986. 100 observed locations were used for training and the remaining 367 locations were used for testing. 

\textbf{Boston housing data set} was originally published by \citet{harrison1978hedonic}. There are 506 samples and 13 feature variables in this dataset. The targets are prices on houses in the Boston area. We randomly sampled 360 training points and 40 testing points.

\textbf{The Concrete data set} \citep{yeh1998modeling} has 8 input variables and 1030 observations. The target variables are the concrete compressive strength. We randomly selected 360 samples as training data and 40 samples as testing data. 

We iterated the experiments 100 times. The mean squared error (MSE) and the test log-likelihood (LL) for the synthetic data sets are reported in Table \ref{table:results_synthetic}. The results for the remaining data sets are reported in Table \ref{table:results}. The synthetic data sets results are also presented in Appendix \ref{appendix:results}. 

\begin{table*}[ht]
    \small
    \caption{The predictive Mean Square Error (MSE) and the test log-likelihood (LL) from the experiments. We show the mean and the standard deviation from the 100 different runs.}
    \label{table:results}
    \begin{center}
    \begin{tabular}{lllll}
        \multicolumn{1}{c}{\bf }  
        &\multicolumn{2}{c}{\bf GP} &\multicolumn{2}{c}{\bf EP}\\
        \multicolumn{1}{c}{\bf DATA SET}  
        &\multicolumn{1}{c}{ MSE}
        &\multicolumn{1}{c}{ LL}
        &\multicolumn{1}{c}{ MSE}
        &\multicolumn{1}{c}{ LL}
        \\
        \hline \\
        Neal& $0.025\pm 0.048 $ & $0.008 \pm0.27$    & $ 0.022\pm0.023 $ & $0.25  \pm 0.27 $  \\
        Friedman                &$ 0.028 \pm 0.01 $ & $-0.017 \pm 0.081 $  & $0.028 \pm 0.01 $ & $0.097\pm 0.085 $ \\
        %Friedman 10      &$0.31  \pm 0.095 $  &   $-0.45 \pm    0.17 $  & $0.31 \pm 0.095$  & $-0.39\pm 0.17 $ \\
        Spatial           & $0.97 \pm 0.15 $  & $-1.42\pm0.11 $     & $0.89\pm 0.13$    & $-1.35 \pm0.06 $   \\
        Boston                & $0.14 \pm 0.098 $& $-0.98 \pm 0.20 $ & $0.13 \pm 0.094 $  & $-0.97 \pm 0.18 $  \\
        Concrete                &$0.15 \pm 0.06$ & $-0.99\pm0.12 $  & $0.15 \pm 0.06$ & $-0.98 \pm 0.11 $  \\
    \end{tabular}
    \end{center}
\end{table*}

From the experiments we see that the elliptical process works just as well as the Gaussian process and sometimes better. The elliptical process gives better mean square error and log-likelihood when the training data is corrupted with outliers. This is according to our hypothesis: the Gaussian process, with the thin tail, overfits more often to the outliers while a heavier tail perceives the outliers as noise.
%For simplicity, we restrict the optimization configurations in this paper: for example, $M$ is 10 and $\ell_0$ is set to a fix value, $0.01$. The restriction of the parameters in the experiment can prevent us from finding the best possible model.

\section{DISCUSSION}
The Gaussian distribution is the default in statistical modeling, for good reasons. Even so, far from everything is Gaussian---casually pretending it is, comes at your own risk. The elliptical distribution is an attractive alternative, which offers an increased flexibility in modeling the tails. The same reasoning applies when comparing the Gaussian process against the elliptical process. An entirely feasible approach is thus to start from the weaker assumptions of the elliptical process and let the data decide whether Gaussianity is, in fact, supported by the evidence.

In this paper we exemplified this for robust regression, where the thin tail of the Gaussian distribution makes it sensitive to outliers. But, applications that rely on accurate tail modeling are abound: anomaly detection \citep{Chandola2009}, Bayesian optimization \citep{Snoek2012}, optimal design \citep{Sjolund2017}, robust control \citep{Calafiore2006} and  modeling extreme values or rare events \citep{deHaan2007} to mention a few.

%. We are, however, especially excited about the prospect of, with applications in e.g. finance \citep{Rocco2014}, traffic safety \citep{Songchitruksa2006} and weather forecasting \citep{Mcgovern2017}. \todo[]{Remove extreme value here?}

The elliptical process is applicable whenever the Gaussian process is applicable, including generative modeling \citep{Casale2018}. It is natural to sample the elliptical process by first sampling the mixing variable $\xi$ and then sampling $y$ conditioned on $\xi$. Since the latter follows a Gaussian distribution, this sampling procedure is efficient if sampling $\xi$ is easy (which it is for the piecewise constant distribution).

%\todo[inline]{Clarify what type of alternative training methods you mean, and if/how it makes use of sampling}
The ease of sampling also hints at the possibility of training the model using sampling based approximate inference instead of empirical Bayes. In our experiments we observed that the training often converged to globally sub-optimal solutions, as indicated by the fact that the Gaussian process---which is a subset of the elliptical process---occasionally returned better results. Putting prior distributions over the hyperparameters would serve as a practical form of regularization. %Another approach could be to divide the optimization into two steps: first we optimize the elliptical parameters to the target noise and then we optimize the remaining hyperparameters while fixing the mixing distribution hyperparameters $\boldsymbol{\theta}$.

%\todo[inline]{Prior on the hyperparameters to regularize optimization}

The piecewise constant mixing distribution is convenient to work with and is straightforward to extend to any piecewise constant distribution. But other distributions that give closed-form expressions exist, and it is conceivable that some of them would be better suited for optimization. To mention a few options, the constant pieces could be replaced by piecewise polynomials, or the entire mixing distribution could be chosen as a Gamma distribution. %Here, it is also worth mentioning that Kano's theorem \eqref{eq:consistence_folrmula} effectively is a Laplace transform.

\section{CONCLUSIONS}
We have presented a general construction of elliptical processes, based on Kano's theorem classifying the consistent elliptical distributions as those that can be represented as a mixture of Gaussian distributions. We exemplified the elliptical process with a piecewise constant mixing distribution, and illustrated some of the flexibility it offers, e.g. the approximated Cauchy process. We performed experiments on robust regression, where we compared the elliptical process with the Gaussian process, and found that, as expected, the elliptical process was more accurate in the presence of outliers.

The added flexibility of the elliptical processes could benefit a range of applications, both classical and new. We curiously look forward to what the future holds.
\subsubsection*{Acknowledgements}
%This work was funded by the Swedish Foundation for Strategic Research grant SM19-0029
%\todo[inline]{Check other's sources of funding}

\bibliographystyle{plainnat}
\bibliography{refs}
\onecolumn
\begin{appendix}
\title{The Elliptical Processes: \\
Supplementary Materials}
\maketitle

\section{THE ELLIPTICAL DISTRIBUTION}
From Theorem \ref{theorem:consistent} we know that a consistent elliptical distribution must mus originate from the integral
\begin{equation}\label{eq:consistancy_appendix}
    p_{\boldsymbol{\theta}}(u)  =|\boldsymbol{\Sigma}|^{-\frac{1}{2}} \int_0^{\infty} \left ( \frac{\xi}{2\pi}\right)^{\frac{n}{2}}e^{\frac{-u\xi}{2}}p_{\boldsymbol{\theta}}(\xi)d\xi.
\end{equation}
By definition, $p_{\boldsymbol{\theta}}(u)$ is a scaled mixture of normal distributions with the stochastic representation
\begin{equation}
    \mathbf{Y}| \ \xi \sim \mathcal{N}(\boldsymbol{\mu, \boldsymbol{\Sigma}}\frac{1}{\xi}), \ \ \ \xi \sim p_{\boldsymbol{\theta}}(\xi).
\end{equation}
By using the following representation of the elliptical distribution,
\begin{equation}
    \mathbf{Y} = \boldsymbol{\mu} + \boldsymbol{\Sigma}^{1/2}\mathbf{Z}\xi^{-1/2},
\end{equation}
where $\mathbf{Z}$ follows the standard normal distribution, we get the mean 
\begin{equation}
    \E[\mathbf{Y}] = \boldsymbol{\mu} + {\boldsymbol{\Sigma}}^{1/2}\E \left [\mathbf{Z}\right ] \ \E[\xi^{-1/2}] = \boldsymbol{\mu}
\end{equation}
and the covariance
\begin{align}
    \Cov(\mathbf{Y}) &= \E \left [(\mathbf{Y}- \boldsymbol{\mu})(\mathbf{Y}- \boldsymbol{\mu})^T \right ] \nonumber \\
    &= \E \left [(\boldsymbol{\Sigma}^{1/2}\mathbf{Z} / \sqrt{\xi})(\boldsymbol{\Sigma}^{1/2}\mathbf{Z} / \sqrt{\xi})^T \right ] \nonumber\\
    &=\E \left [{\xi}^{-1}\boldsymbol{\Sigma}^{1/2}\mathbf{Z}\mathbf{Z}^T(\boldsymbol{\Sigma}^{1/2})^T  \right ]\nonumber \\
    &=\E\left [ {\xi}^{-1}\right] \boldsymbol{\Sigma}.
\end{align}
The variance is a scale factor of the scale matrix $\boldsymbol{\Sigma}$. To get the variance we have to derive $\E\left [ {\xi}^{-1}\right]$. Note that if $\xi$ follows the scale chi-square distribution, $E[\xi^{-1}] = \nu/(\nu-2)$. We recognize form the Student-t distribution, where $\Cov(\mathbf{Y}) = \nu/(\nu-2)\mathbf{\Sigma}$.

%%%% CONDITIONAL DISTRIBUTION %%%%%
\section{CONDITIONAL DISTRIBUTION}\label{appendix:conditional_dist}
\subsection{Proof of Proposition \ref{prop:conditional_distribution}}
\label{appendix:proposition_conditional}

To prove Proposition \ref{prop:conditional_distribution}, 
we partition the data $\mathbf{y}$ as $[\mathbf{y}_1, \mathbf{y}_2]$, so $n_1$ data points belong to $\mathbf{y}_1$, $n_2$ data points belong to $\mathbf{y}_2$ and $n_1 + n_2 = n$. 

The write joint distribution of  $[\mathbf{y}_1, \mathbf{y}_2]$ as $p(\mathbf{y}_1,\mathbf{y}_2|\xi)p_{\boldsymbol{\theta}}(\xi)$. The conditional distribution of $\mathbf{y}_2$, given $\mathbf{y}_1$ is then $p(\mathbf{y}_2|\mathbf{y}_1,\xi)p_{\boldsymbol{\theta}}(\xi|\mathbf{y}_1)$.

For a given $\xi$, $p(\mathbf{y}_2|\mathbf{y}_1,\xi)$ is the conditional normal distribution and so
\begin{equation}
   p(\mathbf{y}_2|\mathbf{y}_1,\xi)\sim \mathcal{N}(\boldsymbol{\mu}_{2|1}, \Sigma_{22|1}\hat{\xi}^{-1}), \ \ \ \hat{\xi} \sim p_{\boldsymbol{\theta}}(\xi|\mathbf{y}_1)
\end{equation}
where,
\begin{align}
    \boldsymbol{\mu}_{2|1} &= \boldsymbol{\mu}_2 + \Sigma_{21}\Sigma_{11}^{-1}(\mathbf{X}_1-\boldsymbol{\mu}_1) \\
     \Sigma_{22|1} &=  \Sigma_{22}- 
     \Sigma_{21}\Sigma_{11}^{-1} \Sigma_{21},
\end{align}
the same as for the conditional Gaussian distribution.
We obtain the conditional distribution $p_{\boldsymbol{\theta}}(\xi | \mathbf{y}_1)$ by first remembering that 
\begin{equation}
    p(\mathbf{y}_1| \xi) \sim \mathcal{N}(\boldsymbol{\mu}_{1}, \Sigma_{11}\frac{1}{\xi}).
\end{equation}  
Using Bayes’ Theorem we get:
\begin{align}\label{eq:cond_xi}
    p_{\boldsymbol{\theta}}(\xi | \mathbf{y}_1) &\propto p(\mathbf{y}_1| \xi)p_{\boldsymbol{\theta}}(\xi)\nonumber \\
    & \propto \left |\Sigma_{11} \frac{1}{\xi}\right|^{-1/2}\exp \left \{  -\xi\frac{u_1}{2}  \right\} p_{\boldsymbol{\theta}}(\xi) \nonumber\\
    &\propto \xi^{n_1/2} \exp \left \{  -\xi\frac{u_1}{2}  \right\} p_{\boldsymbol{\theta}}(\xi).
\end{align}
Recall that $u_1 = (\mathbf{y}-\boldsymbol{\mu}_1)^T\boldsymbol{\Sigma}_{11}^{-1}(\mathbf{y}-\boldsymbol{\mu}_1)$). We normalize the distribution by
\begin{equation}\label{eq:normalize_xi}
    c_{n_1, \boldsymbol{\theta}}^{-1} = \int_0^{\infty}
    \xi^{n_1/2} \exp \left \{  -\xi\frac{u_1}{2}  \right\} p_{\boldsymbol{\theta}}(\xi) d\xi
\end{equation}
The conditional mxing distribution is 
\begin{equation}\label{eq:cond_mixing}
    p_{\boldsymbol{\theta}} (\xi|\mathbf{y}_1) =  c_{n_1, \boldsymbol{\theta}}\xi^{n_1/2} \exp \left \{  -\xi\frac{u_1}{2}  \right\} p_{\boldsymbol{\theta}}(\xi)
\end{equation}
The conditional distribution of $\mathbf{y}_2$, given $\mathbf{y}_1$, is derived by using the consistency formula \eqref{eq:consistancy_appendix}
\begin{equation}
    p(\mathbf{y}_2|\mathbf{y}_1) =
\frac{1}{|\boldsymbol{\Sigma}_{22|1}|^{1/2}(2\pi)^{n_2/2}} \int_0^{\infty}\xi^{n_2/2}e^{-u_{2|1}\xi/2}p(\xi|\mathbf{y}_1)d\xi,
\end{equation}
where $u_{2|1} = (\mathbf{y}_2-\boldsymbol{\mu}_{2|1})^T\Sigma_{22|1}^{-1}(\mathbf{y}_2-\boldsymbol{\mu}_{2|1})$.
Using Equation \eqref{eq:cond_mixing} we get
\begin{equation}\label{eq:prop_condiitonal}
   p(\mathbf{y}_2|\mathbf{y}_1)  =
\frac{c_{n_1, \boldsymbol{\theta}}}{|\boldsymbol{\Sigma}_{22|1}|^{1/2}(2\pi)^{n_2/2}} \int_0^{\infty}\xi^{n/2}e^{-(u_{2|1} + u_1)\xi/2}p_{\boldsymbol{\theta}}(\xi)d\xi
\end{equation}

%%%%%%%%%%%%% ADD NOISE %%%%%%%%%%%%%%%%%%%%%
\section{ADD NOISE TO THE ELLIPTICAL PROCESS FOR REGRESSION}\label{appendix:added_noise}
If $\mathbf{F}$ is the stochastic variable of the latent function of the elliptical process and $\mathbf{Z}$ is the zero mean stochastic variable of the noise. Given $\xi$, $\mathbf{F}$ and $\mathbf{Z}$ are Gaussian, thus uncorrelated and independent. We use the law of total expectation to show that they are uncorrelated when $\xi$ is not given.
\begin{align}
    \E \left [  \mathbf{F}\,\mathbf{Z} \right ]&= \E_\xi \left [\E \left [ \mathbf{F} \, \, \mathbf{Z} \right| \xi]\right ]  \\ &=  \E_\xi \left [\E\left [ \mathbf{Z} | \xi \right] \ \ \E \left [ \mathbf{F} |\xi \right]\right ] \nonumber \\
\end{align}
but $\E\left [ \mathbf{Z} | \xi \right] = 0$ so we get
\begin{equation}
    \E \left [  \mathbf{F}\,\mathbf{Z} \right ] = 0.
\end{equation}
which is the same as saying than $\mathbf{F}$ and $\mathbf{Z}$ are uncorrelated. If $\mathbf{F}$ and $\mathbf{Z}$ where independent, $ \E \left [  \mathbf{F}\,\mathbf{Z} \right ] = \E  [  \mathbf{F} ]\ \E  [  \mathbf{Z} ]$, which is not true since both $\mathbf{F}$ and $\mathbf{Z}$ depends on $xi$.

\section{EXAMPLE: PIECEWISE CONSTANT}\label{appendix:deriving_sq_box_dis}
We start with deriving the piecewise constant distribution with one piece, so $p_{\boldsymbol{\theta}}(\xi) =h_1\mathbbm{1}_{(\ell_{0}, \ell_0 +\Delta)}(\xi) $, a constant function between $\ell_0$ and $\ell_0 +\Delta$. A probability distribution must normalize to one, so $h_1 = 1/\Delta$. The elliptical distribution is 
\begin{equation}
        p_{\boldsymbol{\theta}}(u)  =|\boldsymbol{\Sigma}|^{-\frac{1}{2}} \int \left ( \frac{\xi}{2\pi}\right)^{\frac{n}{2}}e^{\frac{-u\xi}{2}}\frac{1}{\Delta}\mathbbm{1}_{(\ell_{0}, \ell_0 +\Delta)}(\xi)  d\xi =
        \frac{|\boldsymbol{\Sigma}|^{-\frac{1}{2}}}{\Delta(2 \pi)^{n/2}} \int_{l_0}^{l_0 + \Delta}  {\xi}^{\frac{n}{2}}e^{\frac{-u\xi}{2}} d\xi
\end{equation}
This is, by using the variable substitution $t = u\xi/2$, the incomplete gamma function 
\begin{equation}\label{eq:incomplete_gamma}
    \Gamma(s, a, b) = \int_a^bt^{s-1}e^{-t}dt.
\end{equation}
The elliptical distribution with one piece is
\begin{equation}\label{eq:derived_phi}
     p_{\boldsymbol{\theta}}(u) =\frac{2u^{-(n/2+1)}}{\pi^{\frac{n}{2}}|\boldsymbol{\Sigma}|^{\frac{1}{2}}} \frac{1}{\Delta}\Gamma \left (\frac{n}{2}+1,u \frac{\ell_0}{2}, u\frac{(\ell_0 +\Delta)}{2}\right ).
\end{equation}
We extend the distribution to $M$ pieces with heights $h_1, h_2, \hdots, h_M$ and width $\Delta$. It total area of the distribution is $\Delta \sum_k h_k$ (the normalization constant) so the $\xi$-distribution is
\begin{equation}
    p_\theta(\xi) =  \frac{1}{\Delta \sum_k h_k}\sum_{k=1}^M h_i \mathbbm{1}_{(\ell_{k-1}, \ell_k)},
\end{equation}
where 
\begin{equation}
    \ell_k = \ell_0 + k\Delta.
\end{equation}
The Elliptical distribution derives to
\begin{equation}\label{eq:derived_phiM}
     p_{\boldsymbol{\theta}}(u) =\frac{2u^{-(n/2+1)}}{\pi^{\frac{n}{2}}|\boldsymbol{\Sigma}|^{\frac{1}{2}} \Delta \sum_k h_k}\sum_{k=1}^M h_k\Gamma\left ( \frac{n}{2}+1,\frac{u \ell_{ k-1}}{2}, \frac{u\ell_k}{2} \right )
\end{equation}
The derivatives of the incomplete Gamma are:
\begin{equation}
    \frac{d}{da} \Gamma(s, a, b) = -a^{s-1}e^{-a},
\end{equation}
\begin{equation}
    \frac{d}{db} \Gamma(s, a, b) = b^{s-1}e^{-b},
\end{equation}
We use these derivatives when implementing the incomplete gamma function with back-propagation in PyTorch. 

%%%%%%%% Conditional distribution %%%%%%
\subsection{Conditional Distribution}
We derive the conditional distribution by using Equation \eqref{eq:prop_condiitonal}.
\begin{align*}
   p(\mathbf{y}_2|\mathbf{y}_1)  &=
\frac{c_{n_1, \boldsymbol{\theta}}}{|\boldsymbol{\Sigma}_{22|1}|^{1/2}(2\pi)^{n_2/2}} \int_0^{\infty}\xi^{n/2}e^{-(u_{2|1} + u_1)\xi/2}p_{\boldsymbol{\theta}}(\xi)d\xi \\
 &= \frac{c_{n_1, \boldsymbol{\theta}} (u_{2|1} + u_1)^{-(n/2+1)}}{|\boldsymbol{\Sigma}_{22|1}|^{1/2}\pi^{n_2/2}2^{-(n_1/2 +1)}\Delta \sum_i h_i}
    \sum_{k=0}^M h_k\Gamma \left ( \frac{n}{2}+1, \frac{\ell_{k-1}(u_{2|1} + u_1)}{2}, \frac{\ell_k(u_{2|1} + u_1)}{2} \right ).
\end{align*}
And we derive the normalization constant $c_{n_1, \boldsymbol{\theta}}$ from Equation \eqref{eq:normalize_xi}.
\begin{align*}
    c_{n_1, \boldsymbol{\theta}}^{-1} &= \int_0^{\infty}
    \xi^{n_1/2} \exp \left \{  -\xi\frac{u_1}{2}  \right\} p_{\boldsymbol{\theta}}(\xi) d\xi \\
     &= \left( \frac{u_1}{2}\right)^{-(n_1/2+1)}\frac{1}{\Delta \sum_i h_1}\sum_{k=1}^M h_k 
      \Gamma \left (  \frac{n_1}{2} +1, \frac{\ell_{k-1}u_1}{2}, \frac{\ell_ku_1}{2} \right ).
\end{align*}
The conditional probability distribution $p(\xi|\mathbf{y}_1)$ is
\begin{equation}
    p(\xi|\mathbf{y}_1) = c_{n_1, \boldsymbol{\theta}} 
    \xi^{n_1/2} e^{  -\xi\frac{u_1}{2}} p_{\boldsymbol{\theta}}(\xi)
\end{equation}
We get the conditional covariance by deriving the expectation value of $\xi^{-1}$
This expression can be rewritten in the same way as the normalization constant to, so
\begin{align}
     \E_{\xi|\mathbf{Y}_1}[\xi^{-1}]  &= \int_0^{\infty} \xi^{-1}p(\xi|\mathbf{y}_1) d\xi\\
     &=c_{n_1, \boldsymbol{\theta}}\int_0^{\infty} \xi^{-1}\xi^{n_1/2} e^{  -\xi\frac{u_1}{2}} p_{\boldsymbol{\theta}}(\xi) d\xi\\
     & = c_{n_1, \boldsymbol{\theta}} \left(\frac{u_1}{2}\right)^{-n_1/2}\frac{1}{\Delta \sum_i h_i }\sum_{k=1}^M 
   h_k\Gamma \left (\frac{n_1}{2}, \frac{u_1\ell_{k-1}}{2}, \frac{u_1\ell_k}{2} \right ) \\
    &= \frac{u_1}{2}\frac{ \sum_{k=1}^M h_k\Gamma \left (\frac{n_1}{2}, \frac{\ell_{k-1} u_1}{2},\frac{\ell_k u_1}{2} \right )
    }{
    \sum_{k=1}^M h_k \Gamma \left (\frac{n_1}{2}+1,\frac{\ell_{k-1} u_1}{2},\frac{\ell_k u_1}{2} \right)}. \nonumber
\end{align}
The conditional covariance of $\mathbf{y}_2$ given $\mathbf{y}_1$ is
\begin{equation}\label{eq:cond_variance}
    \Cov(\mathbf{y}_2| \mathbf{y}_1)= \frac{u_1}{2}\frac{ \sum_{k=1}^M h_k\Gamma \left (\frac{n_1}{2}, \frac{\ell_{k-1} u_1}{2},\frac{\ell_k u_1}{2} \right )
    }{
    \sum_{k=1}^M h_k \Gamma \left (\frac{n_1}{2}+1,\frac{\ell_{k-1} u_1}{2},\frac{\ell_k u_1}{2} \right)} \boldsymbol{\Sigma}_{22|1}.
\end{equation}

%%%%% IMPLEMENTATION %%%%%%
\section{IMPLEMENTATION}\label{appendix:implementation}
The model was implemented in PyTorch \citep{paszke2019pytorch}.
The incomplete gamma function is not part of the PyTorch library, so we implement it. A naive implementation sometimes results in numerical underflow, especially when $u$ is large. We solved this problem by standardizing the input data, using the package mpmath \citep{mpmath}, which can compute the incomplete gamma function with arbitrary precision, and scaling the incomplete gamma function. 

The incomplete gamma function grows exponentially and can easily cause numerical overflow. To overcome this issue, we use a scaled version of the incomplete gamma function,
\begin{equation}
    \Psi_i(s, \ell_{i-1}, \ell_i, u) = 
    e^{u/2} (u/2)^{-s} \Gamma\left ( s, \frac{\ell_{i-1} u}{2}, \frac{\ell_i u}{2} \right ). 
\end{equation}
We get
\begin{equation}\label{eq:derived_phiM_scaled}
     p_{\boldsymbol{\theta}}(u) =\frac{ e^{-u/2}} {(2\pi)^{\frac{n}{2}}|\boldsymbol{\Sigma}|^{\frac{1}{2}} \Delta \sum_i h_i}\sum_{k=1}^M h_k\Psi_k\left ( \frac{n}{2}+1, \ell_{ k-1}, \ell_k, u\right ).
\end{equation}
For the implementation we must differentiate $\Psi_i$ with respect to $u$
\begin{equation}
    \frac{\partial }{\partial u} \Psi_i(s, \ell_{i-1}, \ell_1, u)
    = \Gamma\left ( s, \frac{\ell_{i-1} u}{2}, \frac{\ell_i u}{2} \right ) 
    \frac{\partial}{\partial u}\left (    e^{u/2}(u/2)^{-s} \right ) 
    + e^{u/2}(u/2)^{-s}\frac{\partial}{ \partial u} \Gamma\left ( s, \frac{\ell_{i-1} u}{2}, \frac{\ell_i u}{2} \right )
\end{equation}
\begin{equation}
\frac{\partial}{\partial u} \left [ e^{u/2} \left (u/2\right)^{-s}\right ]= e^{u/2}\left( -\frac{s}{2}(u/2)^{-(s+1)} \right) +
 \frac{e^{u/2}(u/2)^{-s}}{2}
= \frac{e^{u/2}(u/2)^{-s}}{2}\left (
 1-\frac{s2}{u}
\right ) 
\end{equation}
\begin{equation}
    \frac{\partial}{ \partial u} \Gamma\left ( s, \frac{\ell_{i-1} u}{2}, \frac{\ell_i u}{2} \right ) = 
    u^{s-1}\left [- \left ( \frac{\ell_{i-1}}{2}\right )^{s}e^{-u\ell_{i-1}/2} + 
    \left ( \frac{\ell_{i}}{2}\right )^{s}e^{-u \ell_{i}/2}\right ] 
\end{equation}
and so
\begin{align}
    \frac{\partial }{\partial u} \Psi_i(s, \ell_{i-1}, \ell_i, u) &=
    \Gamma\left ( s, \frac{\ell_{i-1} u}{2}, \frac{\ell_i u}{2} \right )
    \frac{e^{u/2}(u/2)^{-s}}{2}\left (
 1-\frac{2s}{u}
\right )  + \frac{- \ell_{i-1}^{s}e^{-\frac{u}{2}(\ell_{i-1}-1)} + 
    \ell_{i}^{s}e^{-\frac{u}{2}(\ell_{i}-1)}}{u} \\ \nonumber
    &=\frac{\Psi_i(s, \ell_{i-1}, \ell_i, u)}{2}\left (
 1-\frac{2s}{u}
\right ) +  \frac{
    \ell_{i}^{s}e^{\frac{u}{2}(1-\ell_{i})
    }- \ell_{i-1}^{s}e^{\frac{u}{2}(1-\ell_{i-1})} }{u}
\end{align}
For the conditional variance $\Cov(\mathbf{y}_2| \mathbf{y}_1)$, derived in \eqref{eq:cond_variance}, we get
\begin{align}
    \Cov(\mathbf{y}_2| \mathbf{y}_1)=& \frac{u_1}{2}\frac{ \sum_{k=1}^M h_k\Gamma \left (\frac{n_1}{2}, \frac{\ell_{k-1} u_1}{2},\frac{\ell_k u_1}{2} \right )
    }{
    \sum_{k=1}^M h_k \Gamma \left (\frac{n_1}{2}+1,\frac{\ell_{k-1} u_1}{2},\frac{\ell_k u_1}{2} \right)} \boldsymbol{\Sigma}_{22|1} \nonumber \\
    =&\frac{e^{-u/2} (u/2)^{n/2}}{e^{-u/2} (u/2)^{n/2+1}} \frac{u_1}{2}\frac{ \sum_{k=1}^M h_k\Psi_k\left ( \frac{n}{2}, \ell_{ k-1}, \ell_k, u\right )
    }{
    \sum_{k=1}^M h_k \Psi_k\left ( \frac{n}{2}+1, \ell_{ k-1}, \ell_k, u\right )} \boldsymbol{\Sigma}_{22|1} \nonumber \\
    =& 
    \frac{ \sum_{k=1}^M h_k\Psi_k\left ( \frac{n}{2}, \ell_{ k-1}, \ell_k, u\right )
    }{
    \sum_{k=1}^M h_k \Psi_k\left ( \frac{n}{2}+1, \ell_{ k-1}, \ell_k, u\right )} \boldsymbol{\Sigma}_{22|1} 
\end{align}
The negative log likelihood is
\begin{align}
    -\log p_{\boldsymbol{\theta}}({u}) &= u/2+ \frac{n}{2}\log (2\pi) + \frac{1}{2} \log |\Sigma| + \log\left ( \Delta\sum_k h_k\right ) \nonumber 
    -\log\left (\sum_{k=1}^M 
    h_k\Psi_k\right ).\nonumber \\
    \label{eq:log_lokelihood2}
\end{align}
During training we minimize the negative log likelihood along with a smoothness penalty on the mixing distribution, $\lambda\sum_{i=2}^M (h_i-h_{i-1})^2$. $\lambda$ is set by cross validation.

\section{RESULTS}\label{appendix:results}
\begin{figure*}[ht]
\vskip 0.3in
    \centering
    \begin{subfigure}[b]{0.4\textwidth}
        \caption{The test mean square error (MSE)}
        \includegraphics[width = \columnwidth]{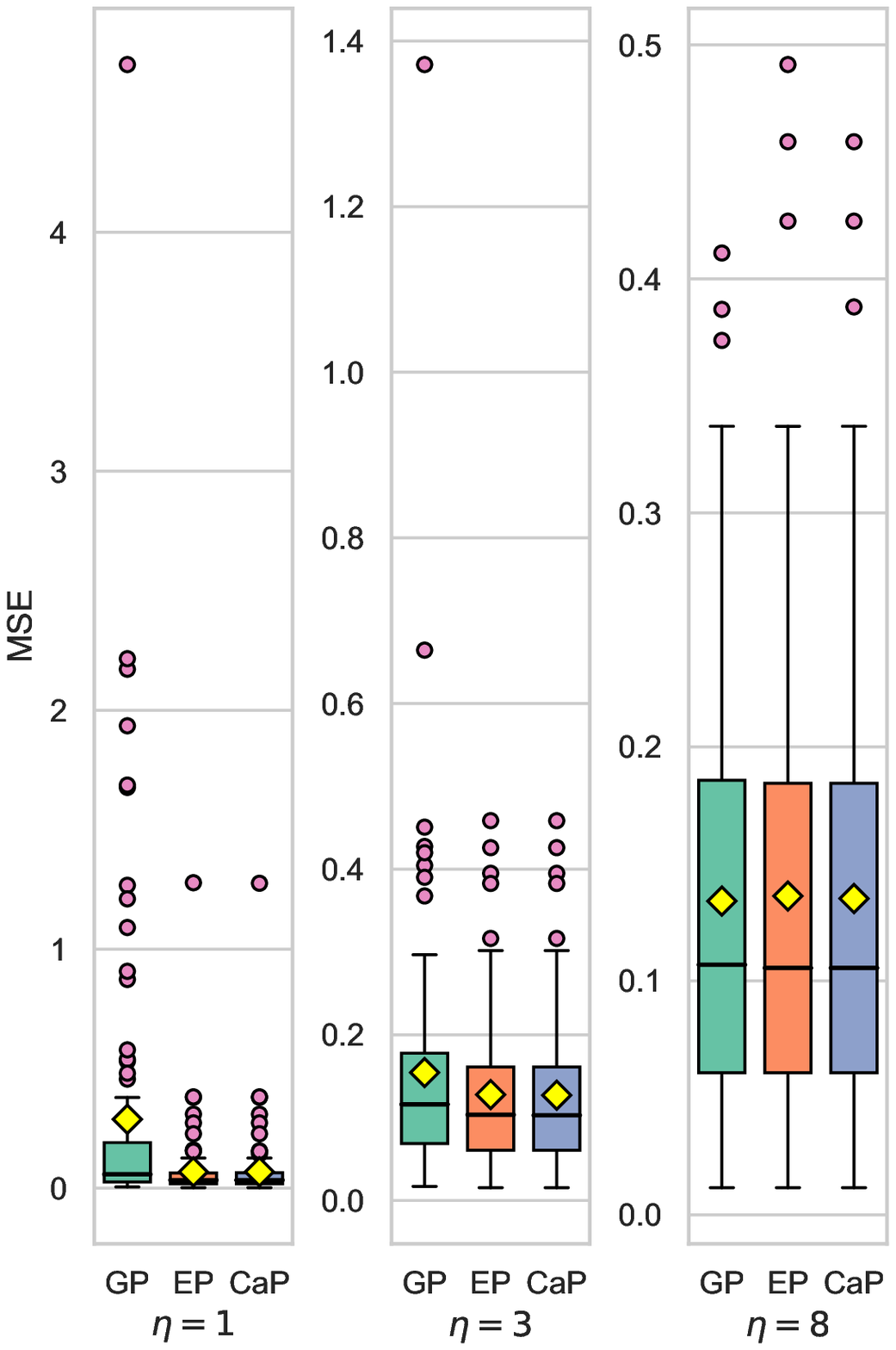}
    \end{subfigure}
    \begin{subfigure}[b]{0.45\textwidth}
        \centering
        \caption{The test log likelihood (LL)}\label{fig:synth_ll}
        \includegraphics[width = \columnwidth]{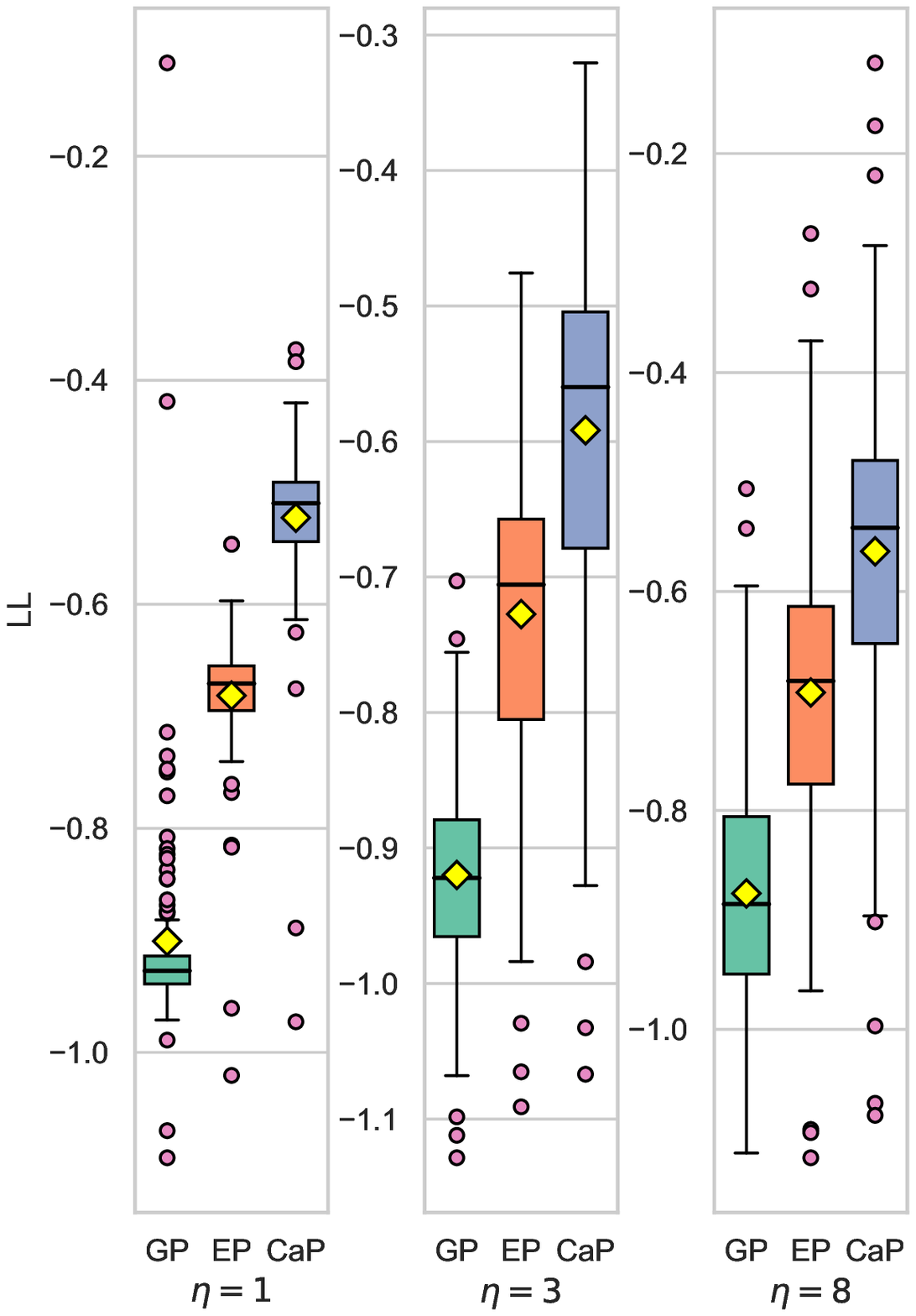}
    \end{subfigure}
    \vskip 0.3in
    \caption{The test mean square error (MSE) (\textbf{a}) and the test log likelihood (LL) (\textbf{b}) of the Synthetic data set. Student-t noise was added to the training data with scale parameter $\eta = 1, 3, 8$. The boxes show the first quartile to the third quartile of the values. The lines going out of the box show where of the smallest and the largest non-outlier value are. The yellow diamond is the mean, the black line is the median and the circular dots are outliers.}
    \label{fig:synth_datasets_mse_ll}
\end{figure*}
\clearpage
\end{appendix}

\end{document}